\theoremstyle{definition}
\newtheorem{dfn}{Definition}[section]
\newtheorem{exl}[dfn]{Example}
\theoremstyle{plain}
\newtheorem{lem}[dfn]{Lemma}
\newtheorem{cor}[dfn]{Corollary}
\theoremstyle{remark}
\newtheorem*{rmk*}{Remark}
\newtheorem*{clm*}{Claim}
\newtheorem*{not*}{Notation}
\numberwithin{equation}{section}
\newcommand{\impossible}{\ensuremath{-\infty}}
\newcommand{\set}[1]{\ensuremath{\left\{ #1 \right\}}}
\newcommand{\cl}{\ensuremath{\mathrm{cl}}}
\newcommand{\R}{\ensuremath{\mathbb{R}}}
\newcommand{\Rp}{\ensuremath{\mathbb{R}_{>0}}}
\newcommand{\Rimp}{\ensuremath{\mathbb{R}\cup\set{\impossible}}}
\title{A mathematical definition of property rights in a Debreu economy}
\author{Abhimanyu Pallavi Sudhir}
\begin{document}

\maketitle
\begin{abstract}
    We present mathematical definitions for rights structures, government and non-attenuation in a generalized $N$-person game (Debreu abstract economy), thus providing a formal basis for property rights theory.
\end{abstract}

\section{Introduction}
\label{sec:introduction}

The development of property rights theory goes back to Coase \cite{coase}, who is credited with the development of the ``property rights approach'' to institutional economics. This approach has been variously reviewed by \cite{randall, furubotn, jongwook, mahoney, antras}, and has been employed as a foundational basis for methodologically individualistic social science theories such as in political science \cite{downs, buchanan, olson} and organizational economics \cite{grossman}.

According to \cite{randall, furubotn}, property rights theory is the logical development of microeconomic theory to institutional economics. Despite this microfoundational role, however, property rights theory remains formulated only in the setting of an exchange economy, which prevents its generalization to a broader class of social sciences. In this paper we shall generalize the theory and its important results to an abstract economy.

We employ the following specification of Debreu's model \cite{debreu} in this paper: there is a collection of agents $A$, and each $\alpha\in A$ has a choice set $X_\alpha$ and utility function $U_\alpha:X\to\Rimp$ (where $X=\prod_\beta X_\beta$ is called a choice space). If a choice set can be written as a product of factors, then the factors (i.e. axes) are called \emph{goods}.

(Instead of requiring an action correspondence, we represent impossible choices as having $\impossible$ utility and call $\set{x\mid U_\alpha(x)\in\R}$ the \emph{support} of $U_\alpha$. Standard assumptions of the continuity of the utility function and on basic geometric properties of the choice set should then be replaced by suitable statements involving the support of $U_\alpha$.)

\section{Definitions}
\label{sec:definitions}

Intuitively, one might imagine aggression as an action that reduces another agent's utility -- however, this is only meaningful in a relative sense. The concepts of consent and aggression are only defined relative to a specified ``rights structure''.

\begin{not*}
    Tuples and sequences are written with parentheses, like $(x_\alpha), (x_n)$; the power set of $S$ is denoted as $2^S$; co-ordinates are labeled by subscripting, e.g. if $x\in X_\alpha\times X_\beta$, then we might write its co-ordinates as $x_\alpha\in X_\alpha, x_\beta\in X_\beta$; similarly the tuple of co-ordinates not corresponding to $\alpha$ may be denoted $x_{-\alpha}$. Unless otherwise specified, uppercase Latin letters denote choice sets and subsets thereof, lowercase Latin letters denote choices $x,y,\dots$ or indices $n, m, \dots$, lowercase Greek letters $\alpha,\beta,\dots$ denote agents.
\end{not*}

\begin{dfn}[Consentification]
    \label{dfn:consent}
    Let $X^0$ be a choice space, and define a sequence $(X^n)$ defined by the following recurrence rule:
    \begin{equation*}
        X^{n+1}_\alpha = X^0_\alpha\times \prod_\beta 2^{X^n_\beta}
    \end{equation*}
    (The $\beta$ co-ordinate of a choice $x_\alpha^n$ is denoted as $x_{\alpha,\beta}^{n}$.) And define for $m\le n$ the projections $\pi^{mn}_\alpha:X^n_\alpha\to X^m_\alpha$ through composition on the following recurrence:
    \begin{align*}
        \pi^{01}_\alpha(x_\alpha^0,R_{\alpha,-\alpha}^{0})&=x_\alpha^0\\
        \pi^{m(m+1)}_\alpha(x_\alpha^0, R_{\alpha,-\alpha}^{m})&=\left(x_\alpha^0, \pi^{(m-1)m}_{-\alpha}(R^m_{\alpha,-\alpha})\right)
    \end{align*}
    Then $\pi^{mn}(x)=(\pi^{mn}_\alpha(x_\alpha))$ is a family of connecting morphisms under which $(X^n)$ forms an inverse family. The inverse limit of this family $X:=\varprojlim X^n$ is called the \emph{consentification} of $X^0$.
\end{dfn}

\begin{dfn}[Indexing consentified choices]
    \label{dfn:index}
    Write the consentified choice space $X$ in the canonical representation of the inverse limit. Then: 
    \begin{enumerate}
        \item $X_\alpha:=\set{(x_\alpha^n):(x^n)\in X}$.
        \item For $x_\alpha\in X_\alpha$, $x_{\alpha,0}:=x_{\alpha,0}^n$, and $x_{\alpha,\beta} := (x^{n}_{\alpha,\beta})$.
        \item For $x_\alpha\in X_\alpha$, $x_\beta\in X_\beta$, we say ``$x_\alpha$ forbids $x_\beta$'' if $\forall n, x^n_\beta\in x^{n+1}_{\alpha,\beta}$.
        \item We denote by $x_{\alpha,\beta}$ the set of $x_\beta$ forbidden by $x_\alpha$.
    \end{enumerate}
\end{dfn}

The purpose of Def~\ref{dfn:index} is to allow us to pretend that $X_\alpha=X_\alpha^0\times \prod_\beta 2^{X_\beta}$ and use notation to such effect, even though this is mathematically impossible (by Cantor's argument). Intuitively, one may factor $2^{X_\beta}$ into the product of Boolean functions on singletons -- these goods are called \emph{consent}, and a quantity of 1 represents non-consent to a particular choice. 

\begin{rmk*}[Def~\ref{dfn:consent} and Cantor]
    It may seem at first glance that despite the formalism, our definition still violates Cantor's argument: if we can ``pretend'' that $X_\alpha = X_\alpha^0\times \prod_\beta 2^{X_\beta}$, that means there is still a natural function from $2^{X_\beta}$ to $X_\alpha$ -- explicitly, for some $R_\beta\subseteq X_\beta$ we may define the ``forbidding choice'' $F_\alpha(R_\beta) = ((x^0_\alpha, x^n_{-\beta}, R_\beta^{n-1}))$ where $R_\beta^n$ is the image of $R_\beta$ under co-ordinate projection and $x_\alpha$ is some arbitrary member of $X_\alpha$ of which we keep the other co-ordinates. However, this function is not an injection, and the reason for this is somewhat delicate: there exist $R_\beta$ such that $F_\alpha(R_\beta)$ forbids choices not in $R_\beta$, i.e. $F_\alpha(R_\beta)_{,\beta}\ne R_\beta$. To see this, let $x_\beta=(x_\beta^n)\in X_\beta$ and construct for each $m$ an $x_{\beta(m)}\in X_\beta$ such that $x_{\beta(m)}^n=x_{\beta}^n$ for $m\le n$ but $x_{\beta(m)}\ne x_\beta $. Then $x_\beta$ is forbidden by $F_\alpha(\set{x_{\beta(m)}})$, and thus $\set{x_{\beta(m)}}$ cannot be a set of the form $x_{\alpha,\beta}$.
\end{rmk*}

\begin{lem}[Topology of $X_\beta$]
    \label{lem:top}
    Let $R\subseteq X_\beta$. Then we define the ``closure'':
    \begin{equation*}
        \cl(R)=\set{y\in X_\beta\mid \forall n, \exists y'\in R, y'_{n}=y_n}
    \end{equation*}
    Then $\cl$ satisfies the Kuratowski closure axioms, hence defining a topology on $X_\beta$. Furthermore, $X_\beta$ is Hausdorff and first-countable. 
\end{lem}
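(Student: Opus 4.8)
The plan is to recognize $\cl$ as the closure operator of the inverse-limit topology on $X_\beta$, where each level set $X^n_\beta$ is regarded as discrete and $X_\beta$ carries the subspace topology of threads in $\prod_n X^n_\beta$; concretely, $y\in\cl(R)$ says exactly that every ``cylinder'' $\set{w\in X_\beta\mid w_n=y_n}$ meets $R$. The one structural fact I would isolate at the outset is coordinate nesting: since each $y\in X_\beta$ is a thread, $y_m=\pi^{mn}_\beta(y_n)$ for $m\le n$, so agreement of two threads at level $n$ forces agreement at every lower level $m\le n$. Everything else is bookkeeping against the defining formula, and I would verify the four Kuratowski axioms in turn, then exhibit an explicit clopen neighborhood basis to read off the Hausdorff and first-countability claims.

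For the axioms: $\cl(\emptyset)=\emptyset$ is vacuous (no witness $y'$ can exist), and $R\subseteq\cl(R)$ follows by taking $y'=y$ at every level. Monotonicity ($R\subseteq S\implies\cl(R)\subseteq\cl(S)$) is immediate since a witness in $R$ is a witness in $S$; this already gives $\cl(R)\cup\cl(S)\subseteq\cl(R\cup S)$. Idempotence reduces to chaining two witnesses at a common level: if $y\in\cl(\cl(R))$ then for each $n$ there is $y'\in\cl(R)$ with $y'_n=y_n$, and then a $y''\in R$ with $y''_n=y'_n=y_n$, so $y\in\cl(R)$. The only axiom with real content is the reverse inclusion $\cl(R\cup S)\subseteq\cl(R)\cup\cl(S)$: supposing $y\notin\cl(R)$ and $y\notin\cl(S)$, I would take levels $n_1,n_2$ at which $R$, respectively $S$, contains no thread agreeing with $y$, set $N=\max(n_1,n_2)$, and use the level-$N$ witness guaranteed by $y\in\cl(R\cup S)$; nesting pushes its agreement down to levels $n_1$ and $n_2$, contradicting the choice of $n_1,n_2$ whether the witness lies in $R$ or in $S$.

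To get the separation and countability properties I would work with the cylinders $B_n(y):=\set{w\in X_\beta\mid w_n=y_n}$. Each $B_n(y)$ is clopen: evaluating the closure formula at $m=n$ shows both $\cl(B_n(y))\subseteq B_n(y)$ and $\cl(X_\beta\setminus B_n(y))\subseteq X_\beta\setminus B_n(y)$. By nesting the family $(B_n(y))_n$ is decreasing, and it is a neighborhood basis at $y$: if $C$ is closed with $y\notin C=\cl(C)$, the defining formula gives a level $n$ at which no point of $C$ agrees with $y$, i.e. $B_n(y)\cap C=\emptyset$, so $B_n(y)$ lies inside the open set $X_\beta\setminus C$. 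Since there are countably many levels, this is a countable neighborhood basis, proving first-countability.

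Finally, Hausdorffness falls out of the same cylinders: distinct threads $y\neq z$ must disagree at some level $n$ (a thread is determined by its coordinates), and then $B_n(y)$ and $B_n(z)$ are disjoint open neighborhoods, since any common point would force $y_n=z_n$. I expect the union axiom to be the only genuine obstacle -- it is the sole place where nesting is indispensable -- while the remaining steps are routine manipulations of the ``$\forall n\,\exists y'$'' quantifier pattern in the definition of $\cl$.
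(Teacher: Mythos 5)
Your proposal is correct and is exactly the verification the paper suppresses with the word ``Trivial'': you identify $\cl$ as the closure operator of the inverse-limit (cylinder) topology, check the Kuratowski axioms using the thread-compatibility (nesting) property $y_m=\pi^{mn}_\beta(y_n)$ -- which is indeed the only fact with content, needed for the union axiom -- and then use the clopen cylinders $B_n(y)$ to obtain Hausdorffness and a countable neighborhood basis. Nothing to correct; your write-up simply makes explicit what the paper leaves to the reader.
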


\begin{proof}
    Trivial.
\end{proof}

\begin{cor}
    \label{lem:closed}
    It is precisely the closed sets that may be forbidden (equivalently, open sets are those which may be consented to) -- where $\phi_\beta$ is the topology on $X_\beta$ expressed in terms of closed sets:
    \begin{equation*}
        X_\alpha = X^0_\alpha \times \prod_{\beta} \phi_\beta
    \end{equation*}
    That is: any choice in $X_\alpha$ may be uniquely determined by a choice in $X_\alpha^0$ and closed sets from each $X_\beta$.
\end{cor}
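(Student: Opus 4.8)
The plan is to carry out everything inside the inverse-limit topology of Lemma~\ref{lem:top} and reduce the whole statement to a single surjectivity fact about inverse limits of surjective systems.

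First I would fix notation. By Def~\ref{dfn:consent}, $X_\beta = \varprojlim X_\beta^n$, so a point $y\in X_\beta$ is a coherent sequence $(y^n)$ with $y^n\in X_\beta^n$; write $\pi^n_\beta:X_\beta\to X_\beta^n$ for the canonical projection and $\pi^n_\beta[R]$ for the image of $R\subseteq X_\beta$. Then the closure of Lemma~\ref{lem:top} rewrites as
\begin{equation*}
    \cl(R)=\set{y\in X_\beta\mid \forall n,\ y^n\in \pi^n_\beta[R]}=\varprojlim \pi^n_\beta[R],
\end{equation*}
so $R$ is closed iff $R=\varprojlim\pi^n_\beta[R]$, i.e. iff $R$ equals the inverse limit of its own shadows. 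On the other side, for a fixed $x_\alpha\in X_\alpha$ put $S_n:=x_{\alpha,\beta}^{n+1}\subseteq X_\beta^n$; unwinding the projection recurrence of Def~\ref{dfn:consent} gives $\pi^{(n-1)n}_\beta[S_n]=S_{n-1}$ (an \emph{equality} of images, hence surjective transition maps), and by Def~\ref{dfn:index} the forbidden set is exactly $x_{\alpha,\beta}=\set{y\in X_\beta\mid \forall n,\ y^n\in S_n}=\varprojlim S_n$. Thus both ``closed'' and ``forbiddable'' are inverse limits of coherent systems of subsets, and the corollary becomes a comparison of such limits.

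The crux — and the step I expect to be the main obstacle — is the following surjectivity lemma: if $(S_n)$ has surjective transition maps ($\pi^{(n-1)n}_\beta[S_n]=S_{n-1}$), then the projection $\varprojlim S_n\to S_n$ is onto, i.e. $\pi^n_\beta[\varprojlim S_n]=S_n$. I would prove this by dependent choice: given $s\in S_n$, repeatedly lift along the surjections $S_m\to S_{m-1}$ to build a thread $(s_m)\in\varprojlim S_n$ with $s_n=s$, projecting forward below level $n$. This is precisely where the apparent paradox of the Remark dissolves: $F_\alpha$ fails to be injective only because it is fed arbitrary (non-coherent) subsets, whereas the data actually carried by a point of $X_\alpha$ is always a surjective system, for which passage to the inverse limit loses no information.

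With the lemma in hand the three assertions are short. \emph{Closed $\Rightarrow$ forbiddable}: given closed $R$, the shadows $S_n:=\pi^n_\beta[R]$ form a surjective system with $\varprojlim S_n=\cl(R)=R$, so any $x_\alpha$ with $x_{\alpha,\beta}^{n+1}=S_n$ (remaining coordinates free) forbids exactly $R$; the ``consented'' region $X_\beta\setminus R$ is then open, giving the parenthetical equivalence by complementation. \emph{Forbiddable $\Rightarrow$ closed}: for any $x_\alpha$ the lemma yields $\pi^n_\beta[x_{\alpha,\beta}]=S_n$, whence $\cl(x_{\alpha,\beta})=\varprojlim S_n=x_{\alpha,\beta}$. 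Finally, for the identification $X_\alpha=X_\alpha^0\times\prod_\beta\phi_\beta$ I would exhibit $x_\alpha\mapsto(x_\alpha^0,(x_{\alpha,\beta})_\beta)$: it lands in $X_\alpha^0\times\prod_\beta\phi_\beta$ by the previous paragraph, it is onto by the closed-$\Rightarrow$-forbiddable construction applied to all $\beta$ simultaneously, and it is injective because the lemma recovers the full defining data from the forbidden sets via $x_{\alpha,\beta}^{n+1}=\pi^n_\beta[x_{\alpha,\beta}]$ together with $x_{\alpha,0}=x_\alpha^0$ — this recovery is exactly what makes ``uniquely determined'' hold despite Cantor.
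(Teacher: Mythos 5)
Your proof is correct; note that the paper itself offers no proof of this corollary at all --- it is stated bare, immediately after Lemma~\ref{lem:top} whose own proof is ``trivial'' --- so there is no argument to compare against, and yours supplies exactly the missing content. The heart of your argument --- that coherence of a thread in $X_\alpha$ forces the subset-coordinates $S_n = x^{n+1}_{\alpha,\beta}$ to form a surjective inverse system, plus the dependent-choice lemma that surjective systems over $\N$ satisfy $\pi^n_\beta[\varprojlim S_m] = S_n$ --- is precisely what makes ``forbiddable $\Leftrightarrow$ closed'' and the decomposition $X_\alpha = X^0_\alpha \times \prod_\beta \phi_\beta$ go through, and it is also the right formalization of the Remark: $F_\alpha$ fails to be injective exactly on non-closed inputs, and your recovery $x^{n+1}_{\alpha,\beta} = \pi^n_\beta[x_{\alpha,\beta}]$ shows it is faithful on closed ones. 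The only informality worth tightening is in the closed-$\Rightarrow$-forbiddable step: the ``remaining coordinates free'' of $x_\alpha$ cannot be literally arbitrary, since they must themselves be coherent systems (e.g.\ constantly empty, or shadows of other closed sets), though fixing this costs one line.
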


Our definitions so far say nothing of an agent actually possessing a right to non-consent (a non-zero quantity of consent) to any choice; this is defined in Def~\ref{dfn:rights}.

\begin{dfn}[Rights structure]
    \label{dfn:rights}
    Given a consentified choice space $X$, let $R$ be a collection of subsets $R_{\alpha\beta}\subseteq X_\beta$ with utility function satisfying $x_{\alpha,\beta}\nsubseteq R_{\alpha\beta}\implies U_\alpha(x)=\impossible$ (that is: an agent cannot forbid something it has no right to forbid). Then $R$ is called a \emph{rights structure}.
\end{dfn}

\begin{dfn}[Non-Aggression Principle]
    \label{dfn:nap}
    Given the construction in Def~\ref{dfn:consent}, a utility function $U_\alpha:X\to\R$ is called ``non-aggressive'' if it returns $\impossible$ for $x\in X$ such that $\exists\beta,x_\alpha\in x_{\beta,\alpha}$.
\end{dfn}

Our definition of rights is normative, rather than positive: a rights structure cannot be inferred from observations about a real society (unless agents are assumed to be non-aggressive). The notion can be made concrete through a definition of a \emph{government}, that may enforce a certain rights structure.

\begin{dfn}[Government]
    \label{dfn:government}
    For some list of subsets $(S_\alpha\subseteq X_\alpha:\alpha)$, an agent $\gamma$ is called a \emph{government} with penal code $S$ if:
    \begin{itemize}
        \item For every $\alpha$ there exists a Boolean good $p_{\alpha}$ (called ``punishment'') of $X_\gamma$ such that the utility of $\alpha$ is $\impossible$ if $p_\alpha=1$.
        \item The utility of $\gamma$ is $\impossible$ if $\exists\alpha$, $x_\alpha\in S_{\alpha}$ and $p_\alpha\ne 1$.
    \end{itemize}
\end{dfn}

\begin{exl}[Perfect minarchy]
    \label{exl:minarchy}
    A minarchy on some rights structure $R$ is an enforcer $\gamma$ whose penal code is the union of non-consent sets: $S_{\alpha}=\bigcup_\beta x_{\beta, \alpha}$.
\end{exl}

More generally, Def~\ref{dfn:nap}, Ex.~\ref{exl:minarchy} are examples of ways to enforce the assumption that the rights structure is \emph{non-attenuated}. The concept of non-attenuation was first formally discussed by \cite{cheung, demsetz}, and is a requirement for efficiency. 

\section{Prices and exchange}
\label{sec:equilibrium}

We provide a construction of an exchange economy as an abstract economy with a rights structure -- this is an alternative to the standard construction \cite{arrow} of an abstract economy with a market maker. 

\begin{dfn}[Exchange economy as a consentified economy]
    \label{dfn:pure}
    Consider a pure exchange economy with agents $A$, $l$ goods, consumption sets $\bar{X}_\alpha\subseteq\R^l$, utility functions $\set{\bar{U}_\alpha:\bar{X}_\alpha\to\R\mid \alpha\in A}$ and initial endowments $w_\alpha\in \Rp^l$. Define an abstract economy with agents as in $A$, and $\forall\alpha$, a choice set $X^0_\alpha:=\set{x_\alpha:A\to\bar{X}_\alpha}$ and a utility function $U^0_\alpha(x):=\bar{U}_\alpha\left(\sum_{\beta\in A}{x_\alpha(\beta)}\right)$. We then define $X_\alpha$ as in Def~\ref{dfn:consent} and define a utility function over it:
    
    \begin{equation*}
        U_\alpha(x)=
        \begin{cases}
            \impossible & \text{if }\exists y \in x_{\alpha,\beta}, y_0(\alpha)=0 \\
            \impossible & \text{if }\exists\beta,x_\alpha\in x_{\beta,\alpha} \\
            \impossible & \text{if }\exists i\le l, \sum_{\beta\in A}{x_{\beta,0}(\alpha)_i} > w_{\alpha,i}\\
            U^0_\alpha(x_0) & \text{else}
        \end{cases}
    \end{equation*}
\end{dfn}

The three support conditions are of crucial importance: the first defines the rights structure $R_{\alpha\beta}=\set{y\in X_\beta\mid y_\alpha\ne 0}$, the second requires that this rights structure is non-attenuated, and the third imposes a budget constraint.

We do not necessarily have the existence of an equilibrium for an abstract economy with rights, as a consentified choice space is not necessarily compact, etc. In fact, it is not even true that an equilibrium of an exchange economy is an equilibrium of the corresponding consentified economy, as in the absence of perfect competition, an economy may have differential pricing. Instead, we demonstrate some elementary results about pricing in a consentfied economy.

\begin{dfn}(Price)
    \label{dfn:price}
    Consider the construction in Def~\ref{dfn:pure}. For a price vector $p\in\R^l$, we define the corresponding ``price'' in an abstract economy:
    \begin{equation*}
        R_\beta(p)=\set{y\in X_\beta\mid \set{x\in X_\alpha\mid p\cdot x_0(\beta) \ge p\cdot y_0(\alpha)} \subseteq y_{\beta,\alpha}}
    \end{equation*}
    A choice $x_\alpha$ such that each $x_{\alpha,\beta}$ is a price is called a ``pricey choice''.
\end{dfn}

\begin{lem}
    \label{lem:pclosed}
    A price as in Def~\ref{dfn:price} is a closed set under the topology defined in Lemma~\ref{lem:closed}.
\end{lem}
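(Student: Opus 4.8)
The plan is to prove closedness by establishing $\cl(R_\beta(p)) \subseteq R_\beta(p)$; since extensivity $R_\beta(p) \subseteq \cl(R_\beta(p))$ is among the Kuratowski axioms verified in Lemma~\ref{lem:top}, this gives $\cl(R_\beta(p)) = R_\beta(p)$ and hence that $R_\beta(p)$ is closed. The central observation I would exploit is that membership in $R_\beta(p)$ is governed entirely by data the closure operator preserves: the price threshold $p\cdot y_0(\alpha)$ depends only on the base coordinate $y_0$, which is fixed by every connecting projection of Def~\ref{dfn:consent}, whereas the forbidding relation is checked level-by-level, and forbidding of $x^n$ requires only the level-$(n+1)$ data of $y$ (Def~\ref{dfn:index}, item 3). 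The whole argument is then a matter of lining up these two facts.

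First I would fix $y\in\cl(R_\beta(p))$ together with an arbitrary $x\in X_\alpha$ satisfying $p\cdot x_0(\beta)\ge p\cdot y_0(\alpha)$; the goal is to show that $y$ forbids $x$, i.e. that $x^n\in y^{n+1}_{\beta,\alpha}$ for every $n$. Fixing $n$, I would apply the closure hypothesis at index $n+1$ to produce a witness $y'\in R_\beta(p)$ with $(y')^{n+1}=y^{n+1}$ as elements of $X^{n+1}_\beta=X^0_\beta\times\prod_\gamma 2^{X^n_\gamma}$. Reading off the two factors of this single equality simultaneously yields $(y')_0=y_0$ from the base factor and $(y')^{n+1}_{\beta,\alpha}=y^{n+1}_{\beta,\alpha}$ from the $\alpha$-consent factor.

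Next, since $p\cdot x_0(\beta)\ge p\cdot y_0(\alpha)=p\cdot(y')_0(\alpha)$, the choice $x$ lies in the threshold set $\set{x'\in X_\alpha\mid p\cdot x'_0(\beta)\ge p\cdot (y')_0(\alpha)}$, which by $y'\in R_\beta(p)$ is contained in $(y')_{\beta,\alpha}$; thus $y'$ forbids $x$. Specializing this forbidding relation to level $n$ gives $x^n\in(y')^{n+1}_{\beta,\alpha}=y^{n+1}_{\beta,\alpha}$. As $n$ was arbitrary, $y$ forbids $x$, and as $x$ was arbitrary, $y\in R_\beta(p)$, which closes the containment.

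I expect the only genuine obstacle to be the level-bookkeeping: one must check that the single level of closure-agreement available at each stage is exactly the level needed to certify forbidding there — this is the one-step shift between $x^n$ and $y^{n+1}$ — and that the base coordinate $y_0$, on which the price threshold depends, is pinned down uniformly across all approximating witnesses by the projection-invariance of the $X^0_\beta$ factor. Once these alignments are made explicit, everything else is a direct unwinding of Definitions~\ref{dfn:consent}, \ref{dfn:index}, and \ref{dfn:price}, with no compactness or countability input required beyond the closure formula itself.
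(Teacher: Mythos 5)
Your proof is correct and is exactly the direct unwinding that the paper's author evidently had in mind when dismissing the proof as ``Trivial'': you show $\cl(R_\beta(p))\subseteq R_\beta(p)$ by taking, for each level $n$, a closure witness $y'\in R_\beta(p)$ agreeing with $y$ at level $n+1$, noting that this agreement pins down both the base coordinate (hence the price threshold) and the level-$(n+1)$ consent set, and then transferring the forbidding of $x^n$ from $y'$ to $y$. The level bookkeeping --- the one-step shift between $x^n$ and $y^{n+1}_{\beta,\alpha}$ in Def~\ref{dfn:index} --- is handled correctly, so the argument is complete.
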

\begin{proof}
    Trivial.
\end{proof}

\begin{lem}[Prices are good]
    \label{lem:pbest}
    If all $x_{-\alpha}$ are pricey, then for any $x_\alpha$, there is a pricey $x'_\alpha$ such that $U_\alpha(x'_\alpha)\ge U_\alpha(x_\alpha)$. 
\end{lem}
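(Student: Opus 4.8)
The plan is to exploit the structure of the utility in Def~\ref{dfn:pure}: whenever $U_\alpha(x)\in\R$ its value is $U^0_\alpha(x_0)=\bar U_\alpha(\sum_\beta x_{\alpha,0}(\beta))$, which depends on $x$ only through the level-$0$ data $x_0=(x_{\beta,0})_\beta$ and not at all on how $\alpha$ forbids. So if I modify $x_\alpha$ by keeping its level-$0$ component $x_{\alpha,0}$ fixed and replacing only its forbidding coordinates $x_{\alpha,\beta}$ by prices, the resulting pricey $x'_\alpha$ will carry exactly the same utility as $x_\alpha$ as long as it stays feasible. The inequality $U_\alpha(x'_\alpha)\ge U_\alpha(x_\alpha)$ would then hold with equality, and the entire content of the lemma collapses to the feasibility of the priceified choice. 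I would first dispose of the degenerate case $U_\alpha(x_\alpha)=\impossible$: there any pricey $x'_\alpha$ satisfies $U_\alpha(x'_\alpha)\ge\impossible$ trivially, and a pricey element of $X_\alpha$ exists since by Lemma~\ref{lem:pclosed} each $R_\beta(p)$ is closed and hence forbiddable by Corollary~\ref{lem:closed}.

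Assume then that $x_\alpha$ is feasible, so all three support conditions of Def~\ref{dfn:pure} hold. I would read a price off the environment pairwise: since each $x_\beta$ ($\beta\ne\alpha$) is pricey, its forbidding of $\alpha$ is $x_{\beta,\alpha}=R_\alpha(p_\beta)$ for some $p_\beta$, and I set $x'_{\alpha,\beta}:=R_\beta(p_\beta)$ — that is, $\alpha$ answers each counterparty with the very price that counterparty is quoting. Define $x'_\alpha:=(x_{\alpha,0},(x'_{\alpha,\beta})_\beta)$; by Lemma~\ref{lem:pclosed} and Corollary~\ref{lem:closed} this is a genuine element of $X_\alpha$, and it is pricey by construction. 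Since only $\alpha$'s forbidding coordinates were touched, the level-$0$ profile $x_0$ is unchanged, so the base utility is preserved outright; it remains only to check the three $\impossible$-conditions for $(x'_\alpha,x_{-\alpha})$.

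The budget condition is immediate: it constrains only the transfers $x_{\beta,0}(\alpha)$, none of which were altered, so it is inherited from the feasibility of $x_\alpha$. Feasibility therefore reduces to the two conditions that actually involve forbidding: the rights condition, which demands $x'_{\alpha,\beta}=R_\beta(p_\beta)\subseteq R_{\alpha\beta}$, i.e. that no choice $y$ with $y_0(\alpha)=0$ is priced out; and the non-aggression condition, which demands $x'_\alpha\notin R_\alpha(p_\beta)$ for every $\beta$. Unwinding Def~\ref{dfn:price}, the latter asks for some $\beta$-choice $x$ with $p_\beta\cdot x_0(\alpha)\ge p_\beta\cdot x_{\alpha,0}(\beta)$ that $\alpha$ does not forbid under $R_\beta(p_\beta)$; I would exhibit such a witness using the budget-feasibility of $\alpha$'s transfers at $p_\beta$, the point of matching the quoted price being exactly that the defining inequalities on the two sides become dual, so the symmetric counter-trade is mutually consented.

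The main obstacle is the simultaneous reconciliation of these last two conditions under a single pricing choice, because they pull against one another through the self-referential definition of $R_\beta(p)$: forbidding more helps $\alpha$ remain within what a counterparty will tolerate (non-aggression), but forbidding more risks exceeding what $\alpha$ is entitled to forbid (rights), and each price refers back to the counterparty's own forbidding set. In particular the rights check is delicate: with $p_\beta\in\Rp^l$ and $\bar X_\beta\subseteq\Rnn^l$, a choice $y$ with $y_0(\alpha)=0$ gives $p_\beta\cdot y_0(\alpha)=0$, so the defining subset condition of $R_\beta(p_\beta)$ reduces to $y$ forbidding essentially all of $X_\alpha$; the crux is to rule such pathological, infeasible over-forbidding choices out of the price, which is where the positivity of $p_\beta$ and the topology of Lemma~\ref{lem:top} must be brought to bear (or, failing that, the quoted price must be refined to sit within $R_{\alpha\beta}$ while remaining a price). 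Establishing that matching $p_\beta$ lands precisely in the window satisfying both conditions is the heart of the argument; everything else is bookkeeping.
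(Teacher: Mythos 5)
Your reduction is the same as the paper's: with $x_{-\alpha}$ fixed, $U_\alpha$ depends on $x_{\alpha,-0}$ only through the first two support conditions of Def~\ref{dfn:pure}, so (keeping $x_{\alpha,0}$, and hence the budget condition and the base utility, untouched) the lemma amounts to producing pricey forbidding coordinates that still satisfy the rights condition and the non-aggression condition; your handling of the degenerate case and of well-definedness via Lemma~\ref{lem:pclosed} and Corollary~\ref{lem:closed} is fine. The gap is in the one step that matters. You instantiate the new coordinates as $x'_{\alpha,\beta}:=R_\beta(p_\beta)$, echoing back the price quoted by $\beta$, and then concede that verifying the two conditions for this choice is ``the heart of the argument'' --- which you do not supply. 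Worse, your hope that the rights check can be rescued by positivity of $p_\beta$ and the topology of Lemma~\ref{lem:top} is not realizable: whenever any choice with $y_0(\alpha)=0$ exists at all, Corollary~\ref{lem:closed} also provides one with $y_0(\alpha)=0$ and $y_{\beta,\alpha}=X_\alpha$ (the whole space $X_\alpha$ is closed), and such a $y$ satisfies the defining inclusion of $R_\beta(p)$ vacuously for \emph{every} $p$; so no choice of price vector keeps $R_\beta(p)$ inside $\set{y\mid y_0(\alpha)\ne 0}$ by itself. Your parenthetical fallback --- refining the quoted price to sit inside $R_{\alpha\beta}$ ``while remaining a price'' --- is also unavailable, since an intersection of $R_\beta(p_\beta)$ with $R_{\alpha\beta}$ is no longer of the form $R_\beta(p)$ demanded by Def~\ref{dfn:price}.

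What the paper does at exactly this point, and what your proposal is missing, is a nesting trick: rather than matching the quoted price, it chooses for each $\beta$ some price vector $p$ with $R_\beta(p)\subseteq x_{\alpha,\beta}$, i.e.\ a price contained in the set that the feasible $x_\alpha$ already forbids. This inclusion makes feasibility hereditary with no further analysis. For non-aggression: feasibility of $x_\alpha$ against the pricey $x_{\beta,\alpha}=R_\alpha(p_\beta)$ yields, for each $\beta$, a witness $y\in X_\beta$ with $p_\beta\cdot y_0(\alpha)\ge p_\beta\cdot x_{\alpha,0}(\beta)$ and $y\notin x_{\alpha,\beta}$; the same $y$ lies a fortiori outside the smaller set $R_\beta(p)=x'_{\alpha,\beta}$, and the level-$0$ coordinate is unchanged, so $x'_\alpha\notin x_{\beta,\alpha}$ for every $\beta$. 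For rights: $R_\beta(p)\subseteq x_{\alpha,\beta}\subseteq\set{y\mid y_0(\alpha)\ne 0}$ follows from the feasibility of $x_\alpha$ (the paper instead argues this condition via its claim that a price never contains a $y$ with $y_0(\alpha)=0$, a claim which itself tacitly restricts to pricey $y$ --- the same pathology noted above also threatens the paper's assertion that a $p$ with $R_\beta(p)\subseteq x_{\alpha,\beta}$ ``is always possible'', so the paper's own writeup is not airtight here). Your matched price $R_\beta(p_\beta)$ bears no inclusion relation to $x_{\alpha,\beta}$, so neither condition is inherited and each would have to be proved outright; that is precisely the part of your argument that is absent. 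In short, you identified the correct reduction but replaced the paper's one essential idea with a construction for which the required verifications do not follow --- and, for the rights condition, provably fail as stated.
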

\begin{proof}
    Since $x_{-\alpha}$ are fixed, $U_\alpha$ depends on $x_{\alpha,-0}$ only if it falls into one of the first two support conditions. It suffices to show that if there is an $x_{\alpha,-0}$ that doesn't fall into any of these support conditions, then there is a pricey $x_{\alpha,-0}$ that doesn't fall into any of these support conditions.
    
    For a price $x_{\alpha,\beta}$ to contain a $y$ such that $y_0(\alpha)=0$, we must have $X_\alpha\subseteq y_{\beta,\alpha}$, which is not possible if $y_{\beta,\alpha}$ is a price. Thus a pricey choice never violates the first condition, and it suffices to show there is a pricey choice that doesn't violate the second condition. 
    
    If $x_{\beta,\alpha}$ is a price, then it it is of the form 
    
    \begin{equation*}
        \set{x\in X_\alpha\mid\set{y\in X_\beta\mid p\cdot y_0(\alpha)\ge p\cdot x_0(\beta)} \subseteq x_{\alpha,\beta}}
    \end{equation*}
    
    If there is an $x_\alpha$ that doesn't violate the second condition, then that means that for each $\beta$ there is a $y\in X_\beta$ such that $p\cdot y_0(\alpha)\ge p\cdot x_{\alpha,0}(\beta)$ and $y\notin x_{\alpha,\beta}$. We then choose any $p$ such that $R_\beta(p)\subseteq x_{\alpha, \beta}$, which is always possible.
\end{proof}

\section{Future work}
\label{sec:conclusion}

While we have explored the mathematical theory of economies with a rights structure in some detail, much work remains to be done on the dynamical properties of such economies and its price theory. The results we have demonstrated with regards to price equilibria are relatively elementary; it is required to study the existence, physical relevance and relative (in)-efficiency of non-price equilibria in consentified economies arising from exchange economies, as well as in consentified economies in general.

Of particular importance is the production exchange economy as detailed in \cite{arrow}. Production involves the concept of \emph{transferable rights} \cite{coase}, which is among the conditions for a non-attenuated rights structure in property rights theory \cite{cheung}\cite{demsetz}. Under our current formulation, a transfer of rights involves a \emph{transformation} of rights: for example, the possession by $\alpha$ of a ``right to till some land'' entails the forbiddance of $\gamma$ from forbidding $\alpha$ from tilling said land, yet the possession by $\beta$ of a sementically identical right entails the forbiddance of $\gamma$ from forbidding $\beta$ from tilling said land. It is not yet clear if this requires a further reformulation of our definition of rights.

By establishing a mathematical foundation for rights structures, we also pave the way for research on the implications of property rights theory for welfare economics -- the relativity of an allocation to a rights structure is analogous to the aggregation problem of utility in welfare economics. It is worth studying if particular rights structures can be shown to maximize particular measures of aggregate welfare.

\printbibliography

\end{document}